\documentclass[journal]{IEEEtran}
\usepackage[latin1]{inputenc}
\usepackage{amsmath}
\usepackage{graphicx}
\usepackage{amssymb}
\usepackage{xcolor}
\makeatletter

\newtheorem{corol}{Corollary}

\newtheorem{thm}{Theorem}

\providecommand{\LyX}{L\kern-.1667em\lower.25em\hbox{Y}\kern-.125emX\@}

\def\BibTeX{{\rm B\kern-.05em{\sc i\kern-.025em b}\kern-.08em
    T\kern-.1667em\lower.7ex\hbox{E}\kern-.125emX}}

\makeatother
\begin{document}
\def\ZZ{{\mathbb Z}}
\def\RR{{\Bbb R}}
\def\NN{{\mathbb N}}
\def\CC{{\mathbb C}}

\title{Delayed Functional Observers for Output-Delayed Linear Systems}
	\author{Hieu Trinh
	\thanks{Hieu Trinh is with the School of Engineering, Deakin University, Waurn Ponds, 75 Pigdons Road, Geelong, Australia. (email:  hieu.trinh@deakin.edu.au)}
	}

\maketitle \maketitle \maketitle \thispagestyle{plain}
\pagestyle{plain}

\begin{abstract}

This paper introduces a novel class of delayed functional observers specifically designed to reconstruct delayed control laws under severe output measurement lags, directly complementing recent literature \cite{trinhnn26, trinhnam26}. By systematically mitigating simultaneous, unequal delays across both the actuator and sensor channels, the proposed architecture resolves dual-channel latency without requiring full-state estimation or computationally intensive real-time distributed integration. Ultimately, this work provides a powerful, low-order framework that bridges the gap between idealized control theory and the practical constraints of modern networked engineering systems.
\end{abstract}

\begin{keywords}
Time-delay compensators, delayed measurements, input delays, functional observers, observer-based control.
\end{keywords}

\section{System Description and Problem Statement}

We consider the following linear system with mismatched time delays in both control input and output vectors
\begin{align}
	\label{b1}
	\dot{x}(t)&=Ax(t)+Bu(t-\tau_u),\\
		\label{b2}
	y(t)& = Cx(t-\tau_y),
\end{align}
where $x(t)\in \mathbb{R}^n$ is the state vector, $u(t)\in \mathbb{R}^r$ is the control input vector, and $y(t)\in\mathbb{R}^{p}$ is the output vector. $\tau_u>0$ and $\tau_y>0$ are the constant time delays in the input and output vectors, respectively. Matrices
$A\in\mathbb{R}^{n\times n}$, $B\in\mathbb{R}^{n\times r}$ and $C\in\mathbb{R}^{p\times n}$ are constant. Without loss of generality, let $B$ be a matrix of full column rank and
$C$ is a matrix of full row rank.

In this paper, we consider the case where $\tau_y > \tau_u$ without assuming that $A$ is Hurwitz. For this system, stabilization is achieved by employing the following delayed control law \cite{trinhnn26}
\begin{align}
	\label{b3}
	u(t-\tau_u)&=Fx(t-\tau_u),
\end{align}
where $F \in \mathbb{R}^{r \times n}$ is the control gain.

Since neither the current state vector $x(t)$ nor the input-delayed state vector $x(t-\tau_u)$ is available for feedback, we utilize the heavily delayed output vector (\ref{b2}). The measurements in this output vector consist of a subset of the state variables subject to a larger delay, $\tau_y > \tau_u$. Using this delayed output vector, we design an asymptotic observer to estimate the control law (\ref{b3}). The estimated functional is then implemented within an observer-based control scheme to stabilize the closed-loop system, despite the presence of distinct time delays in both the input and output channels.

This paper is organized as follows. Section II presents the design of functional observers to estimate the control law (\ref{b3}) as well as numerical examples to illustrate the results. Concluding remarks are given in Section III.

\section{Functional Observers for Estimating Delayed Control Laws}
In this section, we design functional observers to asymptotic estimate the delayed control law (\ref{b3})
\[z(t)=u(t-\tau_u)=Fx(t-\tau_u)\] using the heavily delayed output vector (\ref{b2}).

Let us first consider the following observer
\begin{align}
	\label{b5}
	\hat{z}(t)&= w(t)+My(t),\\
	\label{b6}
	\dot{w}(t)&=Nw(t)+N_{\tau}w(t-\tau)+Gy(t)+G_1y(t-\tau)  \nonumber\\
	&+ Ju(t-2\tau_u)+J_1u(t-\tau_u-\tau_y), 
\end{align}
where $\tau=\tau_y-\tau_u>0$, $w(\theta)=\rho(\theta)$  for $\theta\in[-\tau,0]$, and $w(t)\in\mathbb{R}^r$.
The matrices $M$, $N$, $N_{\tau}$, $G$, $G_1$, $J$, and $J_1$ are design parameters to be determined such that $\hat{z}(t)\to z(t)$ asymptotically. 

Defining the estimation error vector $e(t)=\hat z(t)-z(t)$, and using $\tau=\tau_y-\tau_u$, the error dynamics are given by
\begin{align}
	\label{b7}
	\dot{e}(t)&=\dot{w}(t)+M\dot{y}(t)-F\dot{x}(t-\tau_u)\nonumber\\ &=Ne(t)+N_{\tau}e(t-\tau)+\mathcal{C}_{1}x(t-\tau_u)+\mathcal{C}_{2}x(t-\tau_y)\nonumber\\ &+\mathcal{C}_{3}x(t-\tau-\tau_y)+\mathcal{C}_{4}u(t-2\tau_u)+\mathcal{C}_{5}u(t-\tau_u-\tau_y),
\end{align}
where\\ 
$\mathcal{C}_1 =NF-FA$, \ $\mathcal{C}_2 =N_{\tau}F+\bar{G}C+MCA$, \ $\mathcal{C}_3 =\bar{G}_1C$, \ $\mathcal{C}_4 =J-FB$, \ $\mathcal{C}_5 =J_1+MCB$, \ $\bar{G}:=G-NM$, $\bar{G}_1:=G_1-N_{\tau}M$, and $\mathcal{\bar{C}}=\begin{pmatrix}
	\mathcal {C}_1 &
	\mathcal {C}_2 & \mathcal {C}_3 &\mathcal {C}_4 &\mathcal {C}_5
\end{pmatrix}$.

\textit{Remark 1:} The existence conditions for the proposed observer \eqref{b5}-\eqref{b6} are identical to those required for the observer in \cite{trinhnn26} (specifically, equations (11)-(12)), which are themselves identical to the conditions established in Theorem 1 of \cite{trinhnam26}. However, the proposed observer (\ref{b5})-(\ref{b6}) offers distinct advantages over the design in \cite{trinhnn26}. By directly accounting for the net difference between output and input delays, this approach ensures more robust error stability while simplifying the structure with fewer delay units. Specifically, the stability analysis of the observer \eqref{b5}-\eqref{b6} reduces to ensuring the asymptotic stability of the following linear delay differential equation
\begin{align}
	\label{b8}\dot{e}(t) = Ne(t) + N_{\tau}e(t - \tau),
\end{align}
which is less conservative than its counterpart in \cite{trinhnn26} (specifically, equation (14) where $\tau < \tau_y$). Furthermore, a more robustly stable error system \eqref{b8} provides a faster convergence rate, thereby improving the performance of the observer-based closed-loop control system.

\textit{Remark 2:} In the case where $\text{rank}(C) = n$, a functional observer with fewer parameters can be constructed as follows
\begin{align}
	\label{b9}
	\dot{\hat{z}}(t)=N\hat{z}(t)+N_{\tau}\hat{z}(t-\tau)+Gy(t)+FBu(t-2\tau_u), 
\end{align}
where $\hat{z}(\theta)=\rho(\theta)$  for $\theta\in[-\tau,0]$, and $\tau=\tau_y-\tau_u$.
The matrices $N$, $N_{\tau}$, and $G$ are to be determined so that
$\hat{z}(t)\to z(t)$ asymptotically. In this setup, the observer (\ref{b9}) acts as a time-delay compensator, leveraging delayed measurements to estimate more current functional.

Defining the estimation error vector $e(t)=\hat z(t)-Fx(t-\tau_u)$, the error dynamics are given by
\begin{align}
	\label{b10}
	\dot{e}(t)&=\dot{\hat{z}}(t)-F\dot{x}(t-\tau_u)=Ne(t)+N_{\tau}e(t-\tau)\nonumber\\&+\mathcal{C}_{1}x(t-\tau_u)+\mathcal{\tilde{C}}_{1}x(t-\tau_y),
\end{align}
where $\tau=\tau_y-\tau_u$, $\mathcal{C}_1 =NF-FA$ and $\mathcal{\tilde{C}}_1 =N_{\tau}F+GC$.

If $\begin{pmatrix} \mathcal {C}_1 & \mathcal{\tilde{C}}_{1}\end{pmatrix}=\bf 0$, equation \eqref{b10} reduces to \eqref{b8}, decoupling the error dynamics from both $x(\cdot)$ and $u(\cdot)$. If \eqref{b8} is also asymptotically stable, then $e(t)\to \bf 0$ as $t\to\infty$ for all admissible initial conditions and inputs $u(\cdot)$.

According to \cite{trinhnam26}, the constraint $\mathcal{C}_{1}=\mathbf{0}$ holds if and only if$$\text{rank} \begin{pmatrix} FA \\ F \end{pmatrix} = \text{rank}(F),$$in which case the matrix $N$ is given by $N=FAF^-$. Since $C$ has full rank ($\text{rank}(C)=n$), the second constraint, $\mathcal{\tilde{C}}_{1}=\mathbf{0}$, is satisfied by setting
\begin{align}
\label{b11}
G&=-N_{\tau}FC^{-}.\end{align}

Thus, for given values of $\tau_u$ and $\tau_y$, we obtain $\tau=\tau_y-\tau_u$, then Lemma 11 in \cite{trinhnam26} can be employed to determine $N_{\tau}$ such that system (\ref{b8}) is asymptotically stable. A sufficient condition for this stability is the feasibility of the LMI presented in Lemma 11 of \cite{trinhnam26}.

\textit{Remark 3:} By explicitly accounting for the time shift $\tau$ between the output and input vectors, the observers in \eqref{b5}-\eqref{b6} and \eqref{b9} provide less conservative results than those in \cite{trinhnn26}. Conversely, Lemma 11 in \cite{trinhnam26} ensures the asymptotic stability of \eqref{b8} for a given $\tau$ up to its maximum upper bound $\tau_M$; beyond this bound ($\tau > \tau_M$), stability can no longer be guaranteed by Lemma 11 alone. Note, however, that when stabilizing \eqref{b1} via the delayed control law \eqref{b3}, the LMI condition of Lemma 11 remains feasible for input delays up to $\bar{\tau}_u$. Thus, if $\tau > \tau_M>\bar{\tau}_u$ while $\tau_u \ll \bar{\tau}_u$, one can deliberately select a larger input delay $\tilde{\tau}_u$ such that $\tau_u < \tilde{\tau}_u < \bar{\tau}_u$. By implementing this intentionally further-delayed control law$$u(t-\tilde{\tau}_u)=Fx(t-\tilde{\tau}_u),$$system \eqref{b1} remains stabilized since the LMI condition is still satisfied. The trade-off is a compromise in control robustness due to the increased input delay. Nevertheless, this strategy effectively reduces the net shift to $\tilde{\tau}=\tau_y-\tilde{\tau}_u$, thereby restoring the asymptotic stability of the error dynamics \eqref{b8}.

\textit{Example 1:} Let us revisit Example 2 in \cite{trinhnn26} where we now let $\tau_u = 0.25\text{s}$, $\tau_y = 1.7\text{s}$, $C = I_2$ and
\[A=\begin{pmatrix} 0  &   1\\-1  & 1 \end{pmatrix}, \quad B=\begin{pmatrix} 0\\1 \end{pmatrix},\]
to design an observer-based controller that stabilizes the system. Firstly, note that $\tau=\tau_y-\tau_u=1.45\text{s}$. We obtain the following control law
$$u(t - 0.25) = Fx(t - 0.25),$$where the gain matrix$$F = \begin{pmatrix} 0.5309 &  -1.6112 \end{pmatrix}$$is obtained by applying Lemma 11 \cite{trinhnam26} (with $\tau_u = 0.25\text{s}$ and $\lambda = 1$).

Next, we employ the second-order observer \eqref{b9} to estimate the augmented functional$$z_{a}(t)=\bar{F}x(t-0.25),$$where$$\bar{F} = \begin{pmatrix} 0.5309 &  -1.6112\\0 &1 \end{pmatrix}.$$Estimating a second-order functional ensures that the rank condition$$\text{rank} \begin{pmatrix} \bar{F}A \\ \bar{F} \end{pmatrix} = \text{rank}(\bar{F})$$is satisfied. A more detailed discussion on this requirement can be found in \cite{trinhnn26, trinhnam26}.

However, given $\bar{N}=\bar{F}A\bar{F}^-=\begin{pmatrix} 3.0349 &   3.8096\\ -1.8836 &  -2.0349 \end{pmatrix}$ and $\tau=1.45$s, the LMI condition of Lemma 11 in \cite{trinhnam26} becomes infeasible. Numerical analysis reveals that this specific LMI condition remains feasible only for delays up to the maximum upper bound of $\tau_M = 1.26$s. 

Therefore, as discussed in Remark 3,  we now add delay to the control input ($\tau_u \to \tilde{\tau}_u$) to fix a stability issue in the observer error dynamics. Since, for this example, the LMI condition presented in Lemma 11 \cite{trinhnam26} remains feasible for input delays up to $\bar{\tau}_u = 0.95\text{s}$. Let us now, pick $\tilde{\tau}_u$ as, say, $\tilde{\tau}_u=0.75\text{s}$ so that $\tilde{\tau}=\tau_y-\tilde{\tau}_u=0.95\text{s}$ which is under the safe threshold $\tau_M$.

Accordingly, we design a delayed control law of the form$$u(t - 0.75) = Fx(t - 0.75),$$where the gain matrix$$F = \begin{pmatrix} 0.8121 &  -0.8469 \end{pmatrix}$$is obtained by applying Lemma 11 in \cite{trinhnam26} with $\tilde{\tau}_u = 0.75\text{s}$ and $\lambda = 1$.

This approach represents a deliberate design trade-off. While artificially increasing the input delay slows the controller response and inherently degrades the robustness of the state stabilization, it effectively mitigates the divergence of the observer error dynamics. Consequently, this strategy forces the net delay into a stable regime, ensuring that the state estimates converge asymptotically to their true values.

Next, we employ a second-order observer (\ref{b9}) to estimate the following augmented functional
\[z_{a}(t)=\bar{F}x(t-0.75),\]
where
$$\bar{F} = \begin{pmatrix} 0.8121 &  -0.8469\\0 &1 \end{pmatrix}.$$
Now, given $\bar{N}=\bar{F}A\bar{F}^-=\begin{pmatrix} 1.0429  &  0.8484\\
	-1.2314  & -0.0429 \end{pmatrix}$ and $\tilde{\tau}=0.95$s, the LMI condition of Lemma 11 in \cite{trinhnam26}  is feasible and we obtain
\[\bar{N}_{\tau}=\begin{pmatrix}-0.7925 &  -0.4862\\0.7561 &  -0.2975\end{pmatrix}\] and that the below error time-delay system is asymptotically stable
\begin{align}
	\dot{e}(t) = \bar{N}e(t) + \bar{N}_{\tau}e(t - 0.95)\nonumber.
\end{align}
From (\ref{b11}), we obtain $G$,  and the functional $z(t)=u(t-0.75)=Fx(t-0.75)$ can now be estimated according to the following equations
\begin{align}
	\hat{z}(t)& = \begin{pmatrix} 1 & 0\end{pmatrix} \hat{z}_a(t)\nonumber\\
	\dot{\hat{z}}_a(t)&=\begin{pmatrix} 1.0429  &  0.8484\\
		-1.2314  & -0.0429 \end{pmatrix}\hat{z}_a(t)\nonumber\\&+\begin{pmatrix}-0.7925 &  -0.4862\\0.7561 &  -0.2975\end{pmatrix}\hat{z}_a(t-0.95) \nonumber\\&+\begin{pmatrix} 0.6436 &  -0.1851\\
		-0.6140 &   0.9378 \end{pmatrix}y(t)+\begin{pmatrix} -0.8469\\
		1\end{pmatrix}u(t-1.5).\nonumber
\end{align}
To illustrate the validity of the observer-based control scheme which combines the controller and observer derived above, Figure \ref{fig1paper4} shows the trajectories of $x_1(t)$ and  $x_2(t)$. It is clear
that asymptotic stability of the closed-loop system has been achieved.
\begin{figure}[!h]
	\centering
	\includegraphics[width=0.9\linewidth]{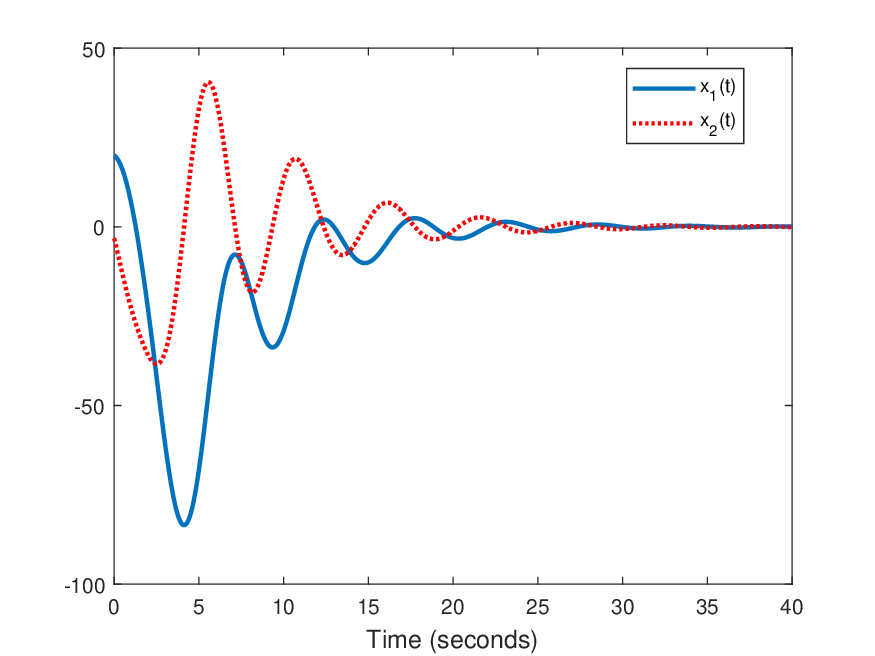}
	\caption{Trajectories of $x_1(t)$ and $x_2(t)$}
	\label{fig1paper4}
\end{figure}

Building upon the foundational work in \cite{trinhnam26, trinhnam1}, we introduce a functional observer incorporating multiple internal delays and delayed output measurements to expand $\tau$ and improve the robustness of the error-estimation time-delay system. We first consider the case where $C$ has full row rank; the scenario where $\text{rank}(C)=n$ can be subsequently deduced from this result (see Remark 5).

Consider the following observer
\begin{align}
	\label{b12}
	&\hat{z}(t)= w(t)+My(t)+M_{\alpha}y(t-\alpha),\\
	\label{b13}
	&\dot{w}(t)=Nw(t)+N_{\tau}w(t-\tau)+N_{\alpha \tau}w(t-\alpha -\tau)\nonumber\\&+Gy(t)+G_{\alpha}y(t-\alpha)+G_{\tau}y(t-\tau)+G_{\alpha \tau}y(t-\alpha-\tau) \nonumber\\
	&+G_{2\alpha \tau}y(t-2\alpha-\tau)+ Ju(t-2\tau_u)+J_1u(t-\tau_u-\tau_y)\nonumber\\
	&+J_2u(t-\alpha-\tau_u-\tau_y), 
\end{align}
where $w(\theta)=\rho(\theta)$  for $\theta\in[-(\tau+\alpha),0]$, $w(t)\in\mathbb{R}^r$, $\tau=\tau_y-\tau_u>0$, and $\alpha >0$ is a parameter
to be specified.
The matrices $M$, $M_{\alpha}$, $N$, $N_{\tau}$, $N_{\alpha \tau}$, $G$, $G_{\alpha}$, $G_{\tau}$, $G_{\alpha \tau}$, $G_{2\alpha \tau}$, $J$, $J_1$ and $J_2$ are design parameters to be determined such that $\hat{z}(t)\to z(t)$ asymptotically. 

Defining the estimation error vector $e(t)=\hat z(t)-z(t)$, the error dynamics are given by
\begin{align}
	\label{b14}
	\dot{e}(t)&=\dot{w}(t)+M\dot{y}(t)+M_{\alpha}\dot{y}(t-\alpha)-F\dot{x}(t-\tau_u)\nonumber\\ &=Ne(t)+N_{\tau}e(t-\tau)+N_{\alpha \tau}e(t-\alpha -\tau)\nonumber\\ &+\mathcal{C}_{1}x(t-\tau_u)+\mathcal{C}_{2}x(t-\tau_y)+\mathcal{\bar{C}}_{3}x(t-\alpha-\tau_y)\nonumber\\ &+\mathcal{\bar{C}}_{4}x(t-\tau -\tau_y)+\mathcal{\bar{C}}_{5}x(t-\alpha -\tau -\tau_y)\nonumber\\ &+\mathcal{\bar{C}}_{6}x(t-2\alpha -\tau -\tau_y)+\mathcal{\bar{C}}_{7}u(t-2\tau_u)\nonumber\\&+\mathcal{\bar{C}}_{8}u(t-\tau_u-\tau_y)+\mathcal{\bar{C}}_{9}u(t-\alpha -\tau_u-\tau_y),
\end{align}
where\\ 
$\mathcal{C}_1 =NF-FA$, \ $\mathcal{C}_2 =N_{\tau}F+\bar{G}C+MCA$, \ $\mathcal{\bar{C}}_{3}=N_{\alpha \tau}F+\bar{G}_{\alpha}C+M_{\alpha}CA$, \ $\mathcal{\bar{C}}_{4}=\bar{G}_{\tau}C$, \ $\mathcal{\bar{C}}_{5}=\bar{G}_{\alpha \tau}C$, \ $\mathcal{\bar{C}}_{6}=\bar{G}_{2\alpha \tau}C$,  \ $\mathcal{\bar{C}}_{7}=J-FB$, \ $\mathcal{\bar{C}}_{8} =J_1+MCB$, \ $\mathcal{\bar{C}}_{9} =J_2+M_{\alpha}CB$, \ $\bar{G}:=G-NM$, \ $\bar{G}_{\alpha}:=G_{\alpha}-NM_{\alpha}$, \ $\bar{G}_{\tau}:=G_{\tau}-N_{\tau}M$, \ $\bar{G}_{\alpha \tau}:=G_{\alpha \tau}-N_{\alpha \tau}M-N_{\tau}M_{\alpha}$, \ $\bar{G}_{2\alpha \tau}:=G_{2\alpha \tau}-N_{\alpha \tau}M_{\alpha}$ and $\mathcal{\bar{C}}=\begin{pmatrix}
	\mathcal {C}_1 &
	\mathcal {C}_2 & \mathcal{\bar{C}}_{3} &\mathcal{\bar{C}}_{4} &\mathcal{\bar{C}}_{5} &\mathcal{\bar{C}}_{6} &\mathcal{\bar{C}}_{7} &\mathcal{\bar{C}}_{8}&\mathcal{\bar{C}}_{9}
\end{pmatrix}$.

The following theorem characterizes sufficient conditions for the existence of observer (\ref{b12})-(\ref{b13}).

\begin{thm}\label{thm:1p3}
	For $\tau=\tau_y-\tau_u>0$, and $\alpha >0$, observer (\ref{b12})-(\ref{b13}) provides
	asymptotic estimation of the functional $z(t)=Fx(t-\tau_u)$ 
	if $\mathcal{\bar{C}}=\bf 0$ and the following delay-dependent error
	dynamics
	\begin{align}
		\label{b15}&\dot{e}(t)=Ne(t)+N_{\tau}e(t-\tau)+N_{\alpha \tau}e(t-\alpha -\tau)
	\end{align}
	is asymptotically stable. 
\end{thm}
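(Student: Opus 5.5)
The plan is to obtain the error dynamics (\ref{b14}) by direct differentiation and then use $\mathcal{\bar{C}}=\mathbf{0}$ to reduce them to the homogeneous delay equation (\ref{b15}). First I would confirm (\ref{b14}). From $e(t)=w(t)+My(t)+M_{\alpha}y(t-\alpha)-Fx(t-\tau_u)$, differentiating gives $\dot e(t)=\dot w(t)+MC\dot x(t-\tau_y)+M_{\alpha}C\dot x(t-\alpha-\tau_y)-F\dot x(t-\tau_u)$. Into this I substitute (\ref{b13}) for $\dot w$ and (\ref{b1}) for each $\dot x(\cdot)$, using $\tau_y=\tau_u+\tau$. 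The input terms then appear with the delays $2\tau_u$, $\tau_u+\tau_y$ and $\alpha+\tau_u+\tau_y$.

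The key algebraic step is to remove $w$ from the right-hand side. I would write
\begin{align}
w(s)&=e(s)-My(s)-M_{\alpha}y(s-\alpha)+Fx(s-\tau_u)\nonumber
\end{align}
for $s=t$, $s=t-\tau$ and $s=t-\alpha-\tau$. I then express every $y$ through (\ref{b2}), and note that $x(t-\tau-\tau_u)=x(t-\tau_y)$ and $x(t-\alpha-\tau-\tau_u)=x(t-\alpha-\tau_y)$. Collecting the coefficients of each distinct delayed state $x(t-\tau_u)$, $x(t-\tau_y)$, $x(t-\alpha-\tau_y)$, $x(t-\tau-\tau_y)$, $x(t-\alpha-\tau-\tau_y)$, $x(t-2\alpha-\tau-\tau_y)$ and of the three delayed inputs reproduces exactly $\mathcal{C}_1,\mathcal{C}_2,\mathcal{\bar C}_3,\dots,\mathcal{\bar C}_9$. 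This uses the definitions $\bar G,\bar G_\alpha,\bar G_\tau,\bar G_{\alpha\tau},\bar G_{2\alpha\tau}$. For instance, $x(t-\alpha-\tau-\tau_y)$ collects $G_{\alpha\tau}C-N_{\alpha\tau}MC-N_{\tau}M_{\alpha}C=\bar G_{\alpha\tau}C$. This bookkeeping is the only real work, and the main risk is merging terms with coinciding delays correctly. The most delicate cases are the $N_\tau w(t-\tau)$ term feeding both the $x(t-\tau_y)$ and $x(t-\alpha-\tau-\tau_y)$ coefficients, and the $N_{\alpha\tau}w(t-\alpha-\tau)$ term feeding three coefficients.

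With (\ref{b14}) established, the hypothesis $\mathcal{\bar{C}}=\mathbf{0}$ makes every forcing term vanish identically. This holds for all trajectories $x(\cdot)$ and all inputs $u(\cdot)$, including the closed-loop input built from $\hat z$. The error therefore satisfies the autonomous linear retarded equation (\ref{b15}) for $t\ge0$, with initial function on $[-(\alpha+\tau),0]$ determined by $\rho$ and the past of $x$. Asymptotic stability of (\ref{b15}), which for linear time-invariant retarded systems is global and exponential, then gives $e(t)\to\mathbf{0}$, i.e. $\hat z(t)\to Fx(t-\tau_u)=z(t)$, for every admissible initial condition. No step beyond the coefficient matching is expected to be an obstacle, since the argument mirrors the one already used for (\ref{b8}).
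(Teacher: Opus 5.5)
Your proposal is correct and follows the paper's argument. The paper takes (\ref{b14}) as given, notes that $\mathcal{\bar{C}}=\mathbf{0}$ reduces it to (\ref{b15}), and concludes $e(t)\to\mathbf{0}$ for all admissible initial conditions and inputs; your explicit derivation of (\ref{b14}) via $w=e-My-M_{\alpha}y(\cdot-\alpha)+Fx(\cdot-\tau_u)$ checks out, except that $N_{\tau}w(t-\tau)$ feeds three coefficients, not two: it also contributes $-N_{\tau}MC$ to $x(t-\tau-\tau_y)$, which gives $\bar{G}_{\tau}C$, and this does not affect the argument.
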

\begin{proof} If $\mathcal{\bar{C}}=\bf 0$, then \eqref{b14} reduces to \eqref{b15}, so the error dynamics are
decoupled from $x(\cdot)$ and $u(\cdot)$. If, in addition, \eqref{b15} is asymptotically stable, then $e(t)\to \bf 0$ as $t\to\infty$ for all admissible initial conditions and inputs $u(\cdot)$. This completes the proof.
\end{proof}

\textit{Remark 4:} The primary purpose of utilizing the observer (\ref{b12})-(\ref{b13}) is to maximize the allowable time delay $\tau$ and enhance the robustness of the error estimation system (\ref{b15}). As reported in \cite{trinhnam26, trinhnam1}, the asymptotic stability conditions for (\ref{b15}) are less conservative than those for (\ref{b8}) due to the application of Lemma 13 in \cite{trinhnam26}. Consequently, system (\ref{b15}) maintains stability under a larger time delay $\tau$ than system (\ref{b8}). Specifically, by strategically leveraging the delayed error term $N_{\alpha \tau}e(t-\alpha -\tau)$, asymptotic stability for system (\ref{b15}) can be achieved even when the delay $\tau$ exceeds the upper bound $\tau_M$. Furthermore, the parameter $\alpha$ acts as a tuning variable. When integrated with the LMI condition in Lemma 13 in \cite{trinhnam26}, $\alpha$ can be optimized to yield a faster error convergence rate (see Example 2). Ultimately, the proposed observer (\ref{b12})-(\ref{b13}) represents a deliberate design trade-off, balancing improved estimation robustness against increased complexity from additional design parameters. Overall, this framework provides designers with a flexible range of options to address the complex issue of estimating delayed functionals.

In the following, we outline a procedure for finding the observer parameters that satisfy the existence conditions stated in Theorem \ref{thm:1p3}.

First, consider the constraints $\mathcal{\bar{C}}_7 = \mathcal{\bar{C}}_8 = \mathcal{\bar{C}}_9 = \mathbf{0}$, which are directly satisfied by choosing $J$, $J_1$, and $J_2$ as follows
\begin{align}
	\label{b16}
	&J = FB, \ J_{1}=-MCB, \ J_2=-M_{\alpha}CB.
\end{align}

Next, since $C$ has full row rank, the constraints $\mathcal{\bar{C}}_4 = \mathcal{\bar{C}}_5 = \mathcal{\bar{C}}_6 = \mathbf{0}$ are satisfied if and only if $\bar{G}_{\tau} = \bar{G}_{\alpha \tau} = \bar{G}_{2\alpha \tau} = \mathbf{0}$. Consequently, $G_{\tau}$, $G_{\alpha \tau}$, and $G_{2\alpha \tau}$ can be determined as follows
\begin{align}
	\label{b17}
	&G_{\tau}=N_{\tau}M, \ G_{\alpha \tau}=N_{\alpha \tau}M+N_{\tau}M_{\alpha}, \ G_{2\alpha \tau}=N_{\alpha \tau}M_{\alpha}.
\end{align}
The constraint $\mathcal{C}_{1}=\mathbf{0}$ holds if and only if$$\text{rank} \begin{pmatrix} FA \\ F \end{pmatrix} = \text{rank}(F),$$in which case the matrix $N$ is given by $N=FAF^-$.

As for
$\mathcal{C}_2 = \mathcal{\bar{C}}_3=\bf 0$, they can be expressed as follows
\begin{align}
	\label{b18}
	&\begin{pmatrix}N_{\tau} &\bar{G} &M\end{pmatrix}\begin{pmatrix}F\\C\\CA\end{pmatrix}=\bf 0,\\
	\label{b19}
	&\begin{pmatrix}N_{\alpha \tau} &\bar{G}_{\alpha} &M_{\alpha}\end{pmatrix}\begin{pmatrix}F\\C\\CA\end{pmatrix}=\bf 0.
\end{align}

By Lemma 2 in \cite{trinhnam26}, equations (\ref{b18}) and (\ref{b19}) admit a non-trivial solution if and only if an orthogonal basis exists for the left null-space of$$\begin{pmatrix}F\\C\\CA\end{pmatrix}.$$

The reader is referred to \cite{trinhnam26} for a comprehensive analysis of the solutions to (\ref{b18}) and (\ref{b19}), including the specific case where $\text{rank} \begin{pmatrix} C\\CA \end{pmatrix} = n$. Ultimately, these solutions ensure the asymptotic stability of system (\ref{b15}), subject to the feasibility of the LMI condition in Lemma 13 of \cite{trinhnam26}.

\textit{Remark 5:} When $\text{rank}(C) = n$, a functional observer with fewer parameters than (\ref{b12})-(\ref{b13}) can be constructed as follows
\begin{align}
	\label{b20}
	\dot{\hat{z}}(t)&=N\hat{z}(t)+N_{\tau}\hat{z}(t-\tau)+N_{\alpha \tau}\hat{z}(t-\alpha -\tau)\nonumber\\&+Gy(t)+G_{\alpha}y(t-\alpha)+FBu(t-2\tau_u), 
\end{align}
where $\hat{z}(\theta)=\rho(\theta)$  for $\theta\in[-(\tau+\alpha),0]$, $\tau=\tau_y-\tau_u$ and $\alpha >0$.
The matrices $N$, $N_{\tau}$, $N_{\alpha \tau}$, $G$ and $G_{\alpha}$ are to be determined so that
$\hat{z}(t)\to z(t)$ asymptotically.

The following corollary, derived from Theorem 1, establishes the existence conditions for the observer in (\ref{b20}).
\begin{corol}
	For $\tau=\tau_y-\tau_u>0$, and $\alpha >0$, observer (\ref{b20}) provides
	asymptotic estimation of the functional $z(t)=Fx(t-\tau_u)$ 
	if 
\begin{align}
	\label{b21}
	NF-FA&=\bf 0,\\
	\label{b22}
	N_{\tau}F+GC&=\bf 0,\\
	\label{b23}
	N_{\alpha \tau}F+G_{\alpha}C&=\bf0,
\end{align}
and the delay-dependent error
system (\ref{b15}) 
	is asymptotically stable.
\end{corol}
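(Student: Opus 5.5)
The corollary can be proved either by direct computation of the error dynamics for (\ref{b20}), or by viewing (\ref{b20}) as a special case of (\ref{b12})--(\ref{b13}) and invoking Theorem \ref{thm:1p3}. The direct computation is cleaner, so I will do that first, mirroring the derivation of (\ref{b10}). Define $e(t)=\hat z(t)-Fx(t-\tau_u)$. Then
\[
\dot e(t)=N\hat z(t)+N_{\tau}\hat z(t-\tau)+N_{\alpha\tau}\hat z(t-\alpha-\tau)+Gy(t)+G_{\alpha}y(t-\alpha)+FBu(t-2\tau_u)-F\dot x(t-\tau_u).
\]
I will then substitute the following identities.
\begin{itemize}
\item[(i)] $F\dot x(t-\tau_u)=FAx(t-\tau_u)+FBu(t-2\tau_u)$, so the input terms cancel exactly.
\item[(ii)] $\hat z(s)=e(s)+Fx(s-\tau_u)$ at $s=t$, $s=t-\tau$ and $s=t-\alpha-\tau$. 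Since $\tau=\tau_y-\tau_u$, these give $Fx(t-\tau_u)$, $Fx(t-\tau_y)$ and $Fx(t-\alpha-\tau_y)$ respectively.
\item[(iii)] $y(t)=Cx(t-\tau_y)$ and $y(t-\alpha)=Cx(t-\alpha-\tau_y)$.
\end{itemize}

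Collecting terms, the error dynamics become
\[
\dot e(t)=Ne(t)+N_{\tau}e(t-\tau)+N_{\alpha\tau}e(t-\alpha-\tau)+(NF-FA)x(t-\tau_u)+(N_{\tau}F+GC)x(t-\tau_y)+(N_{\alpha\tau}F+G_{\alpha}C)x(t-\alpha-\tau_y).
\]
Under (\ref{b21})--(\ref{b23}) all three coefficient matrices vanish, so the error dynamics reduce exactly to (\ref{b15}). They are then decoupled from $x(\cdot)$ and $u(\cdot)$. By the assumed asymptotic stability of (\ref{b15}), $e(t)\to\mathbf 0$ for all initial functions $\rho$ and all inputs, which proves the claim. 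This is the same closing argument as in the proof of Theorem \ref{thm:1p3}.

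Alternatively, the corollary follows from Theorem \ref{thm:1p3} by the embedding $M=M_{\alpha}=\mathbf 0$, $w=\hat z$, $G_{\tau}=G_{\alpha\tau}=G_{2\alpha\tau}=\mathbf 0$, $J=FB$, $J_1=J_2=\mathbf 0$. With these choices $\bar G=G$ and $\bar G_{\alpha}=G_{\alpha}$. We have $\mathcal C_1=NF-FA$, and $\mathcal C_2$, $\bar{\mathcal C}_3$ reduce to the left sides of (\ref{b22}) and (\ref{b23}). The remaining blocks $\bar{\mathcal C}_4,\dots,\bar{\mathcal C}_9$ vanish because all the $M$'s and the extra $G$'s are zero and $J=FB$. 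Hence $\bar{\mathcal C}=\mathbf 0$, and Theorem \ref{thm:1p3} applies.

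The only step requiring care is the time-index bookkeeping in (ii): $t-\tau-\tau_u=t-\tau_y$ must line up with the output delay so that the terms in $x(t-\tau_y)$ and $x(t-\alpha-\tau_y)$ combine. There is no genuine obstacle, since stability of (\ref{b15}) is an assumption rather than something to be proved. The hypothesis $\text{rank}(C)=n$ plays no role in the proof itself; it only ensures that (\ref{b22})--(\ref{b23}) are solvable, for example via $G=-N_{\tau}FC^{-}$ and $G_{\alpha}=-N_{\alpha\tau}FC^{-}$.
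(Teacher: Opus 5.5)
Your proposal is correct and matches the paper, which obtains this corollary as a specialization of Theorem~\ref{thm:1p3}. Your embedding ($M=M_{\alpha}=\mathbf{0}$, $w=\hat z$, $G_{\tau}=G_{\alpha\tau}=G_{2\alpha\tau}=\mathbf{0}$, $J=FB$, $J_1=J_2=\mathbf{0}$) makes $\bar{\mathcal{C}}=\mathbf{0}$ exactly equivalent to (\ref{b21})--(\ref{b23}), and your direct computation is the same reduction written out by hand, as in (\ref{b10}), with the index identity $t-\tau-\tau_u=t-\tau_y$ and the remark on $\text{rank}(C)=n$ both handled correctly.
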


Note that (\ref{b21}) corresponds to the constraint $\mathcal{C}_{1}=\mathbf{0}$, which has already been discussed. With $\text{rank}(C) = n$, solutions to (\ref{b22}) and (\ref{b23}) are given, respectively, by (\ref{b11}) and
\begin{align}
	\label{b24}
	G_{\alpha}&=-N_{\alpha \tau}FC^{-}.\end{align}

Thus, given $\tau_u$ and $\tau_y$, we compute $\tau = \tau_y - \tau_u$. Then, for a specified $\alpha > 0$, the LMI condition in Lemma 13 of \cite{trinhnam26} is employed to determine $N_{\alpha \tau}$ and guarantee the asymptotic stability of the error system
\begin{align}
	\label{b25}\dot{e}(t)=FAF^-e(t)+N_{\tau}e(t-\tau)+N_{\alpha \tau}e(t-\alpha -\tau).\end{align}
As discussed in Remark 4, $\alpha$ serves as a tuning parameter that can be optimized alongside the LMI condition to maximize the error convergence rate.

\textit{Example 2:} We revisit Example 1, this time employing observer (\ref{b20}) in place of observer (\ref{b9}). Recall that by selecting $\tilde{\tau}=0.95$s, the LMI condition of Lemma 11 in \cite{trinhnam26}  is feasible. 

To illustrate the role of $\alpha$ as a tuning parameter for accelerating the error convergence rate, we solve the LMI in Lemma 13 of \cite{trinhnam26} with $\tilde{\tau}=0.95$ for $\alpha =0.55$ and $\alpha =0.95$. In both instances, the LMI is feasible for $\lambda=1$, yielding the corresponding matrices $N_{\tau}$ and $N_{\alpha \tau}$. To demonstrate that observer (\ref{b20}) achieves faster error convergence than observer (\ref{b9}), Figure \ref{fig2paper4} plots the trajectories of $e_1(t)$ under both values of $\alpha$ alongside the trajectory obtained using observer (\ref{b9}), assuming identical initial conditions.
\begin{figure}[!h]
	\centering
	\includegraphics[width=0.9\linewidth]{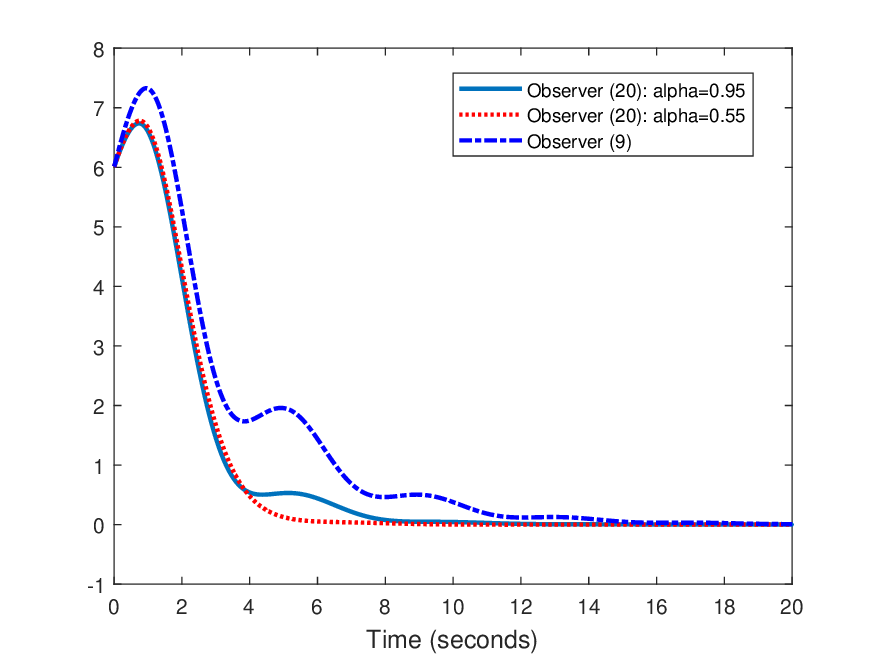}
	\caption{Trajectories of $e_1(t)$}
	\label{fig2paper4}
\end{figure}

Furthermore, as noted in Example 1, the LMI condition in Lemma 11 of \cite{trinhnam26} remains feasible for delays up to the maximum upper bound of $\tau_M = 1.26$s. In contrast, for $\tilde{\tau}=1.48$s and $\tilde{\tau}+\alpha=2$s, the LMI condition in Lemma 13 of \cite{trinhnam26} remains feasible, thereby ensuring the asymptotic stability of (\ref{b15}).

\textit{Example 3:}  This example is provided to further illustrate the practical implications of Remarks 3 and 4, specifically demonstrating how the observer (\ref{b20}) achieves a faster error convergence rate and an enhanced observer-based closed-loop response. For context, recall that in Example 1, selecting $\tilde{\tau}_u = 0.75\text{s}$ resulted in a relatively sluggish and highly oscillatory system response (see Figure \ref{fig1paper4}). Conversely, as established in Example 2, $\tilde{\tau}$ can be selected larger than $\tau_M$. To contrast these approaches, let us now choose a smaller input delay parameter, namely $\tilde{\tau}_u = 0.5\text{s}$. Note that with this choice, the shifting parameter becomes $\tilde{\tau} = \tau_y - \tilde{\tau}_u = 1.7 - 0.5 = 1.2\text{s}$. Because this value is now very close to the upper bound $\tau_M$, utilizing observer (\ref{b9}) would consequently result in a highly degraded and slow estimation error convergence rate. To overcome this limitation, we now employ the proposed observer (\ref{b20}) to estimate the delayed control law.

First, with $\tilde{\tau}=0.5\text{s}$, we design a delayed control law of the form$$u(t - 0.5) = Fx(t - 0.5),$$where the gain matrix$$F = \begin{pmatrix} 0.6979 & -1.2628 \end{pmatrix}$$is obtained by applying Lemma 11 in \cite{trinhnam26} with $\tilde{\tau}_u = 0.5\text{s}$ and $\lambda = 1$.

Next, we employ a second-order observer (\ref{b20}) to estimate the following augmented functional
\[z_{a}(t)=\bar{F}x(t-0.5),\]
where
$$\bar{F} = \begin{pmatrix} 0.6979 & -1.2628\\0 &1 \end{pmatrix}.$$

Given the matrix $\bar{N} = \bar{F}A\bar{F}^- = \begin{pmatrix} 1.8095 &   1.7202 \\ -1.4329 &  -0.8095 \end{pmatrix}$, alongside the delay parameter $\tilde{\tau} = 1.2\text{s}$ and $\alpha = 0.6$, the LMI condition formulated in Lemma 13 of \cite{trinhnam26} is found to be feasible with $\lambda = 1$. Consequently, we obtain the following observer gain matrices\[\bar{N}_{\tau}=\begin{pmatrix}-1.4078  & -1.2580\\1.3524  &  0.5868\end{pmatrix}, \bar{N}_{\alpha \tau}=\begin{pmatrix}0.1113 &   0.2171\\
	-0.4275 &  -0.2879\end{pmatrix}.\]
This feasibility guarantees that the estimation error system (\ref{b25}) is asymptotically stable, achieving an error convergence rate comparable to that of the closed-loop controller.

From (\ref{b11}) and (\ref{b24}), we obtain $G$ and $G_{\alpha}$, respectively. As a result,  the functional $z(t)=u(t-0.5)=Fx(t-0.5)$ can now be estimated according to the following equations
\begin{align}
	\hat{z}(t)& = \begin{pmatrix} 1 & 0\end{pmatrix} \hat{z}_a(t)\nonumber\\
	\dot{\hat{z}}_a(t)&= \begin{pmatrix} 1.8095 &   1.7202 \\ -1.4329 &  -0.8095 \end{pmatrix}\hat{z}_a(t)\nonumber\\&+\begin{pmatrix}-1.4078  & -1.2580\\1.3524  &  0.5868\end{pmatrix}\hat{z}_a(t-1.2) \nonumber\\&+\begin{pmatrix}0.1113 &   0.2171\\
		-0.4275 &  -0.2879\end{pmatrix}\hat{z}_a(t-1.8)\nonumber\\&+\begin{pmatrix}0.9825 &  -0.5197\\-0.9439 &  1.1211\end{pmatrix}y(t)\nonumber\\&+\begin{pmatrix}-0.0777 &  -0.0766\\
		0.2984 &  -0.2520\end{pmatrix}y(t-0.6)\nonumber\\&+\begin{pmatrix}-1.2628\\
		1\end{pmatrix} u(t-1).\nonumber
\end{align}
To demonstrate the effectiveness of the proposed observer-based control scheme, which integrates the controller and observer derived above, the state trajectories of $x_1(t)$ and $x_2(t)$ are illustrated in Figure \ref{fig3paper4}. The simulation results clearly indicate that both asymptotic stability of the closed-loop system and an enhanced convergence rate are successfully achieved (cf. Figure \ref{fig1paper4}).
\begin{figure}[!h]
	\centering
	\includegraphics[width=0.9\linewidth]{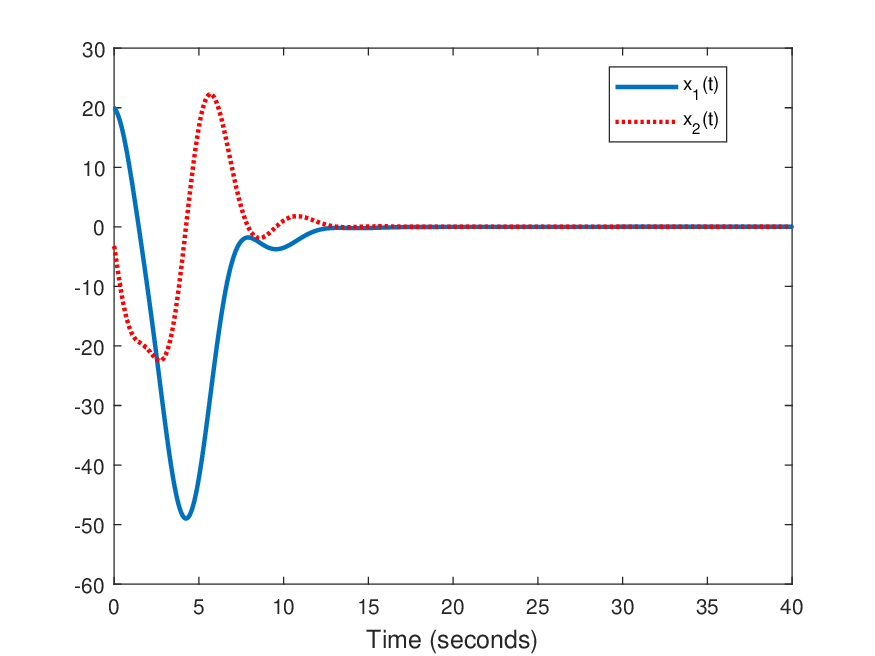}
	\caption{Trajectories of $x_1(t)$ and $x_2(t)$}
	\label{fig3paper4}
\end{figure}

\section{Conclusion}

In this paper, a novel class of delayed functional observers has been successfully developed to reconstruct delayed control laws subject to severe output measurement lags, thereby extending the foundational results reported in \cite{trinhnn26, trinhnam26}. The core strength of the proposed architecture lies in its ability to systematically mitigate simultaneous, unequal delays across both the actuation and sensing channels. By aligning the mismatched system timelines, the framework resolves dual-channel latency without the need for full-state estimation or computationally intensive, real-time distributed integration. Consequently, the collective advancements presented in this work and related literature \cite{trinhnn26, trinhnam26, trinhnam1} establish a highly efficient, low-order framework that successfully bridges the gap between idealized, delay-free control theory and the practical communication constraints of modern networked engineering systems.

\end{document}